\documentclass[submission,copyright,creativecommons]{eptcs}
\usepackage{breakurl}             
\usepackage{underscore}           

\usepackage{graphicx}
\usepackage{amsmath,amsthm,amssymb,proof,rotating,xcolor,mdframed}

\usepackage{hyperref}
\usepackage{tikz}
\usepgflibrary{arrows}
\usetikzlibrary{matrix,arrows,automata,decorations,fit,backgrounds}
\usetikzlibrary{decorations.pathreplacing,shapes.geometric,decorations.text}

\newcommand{\imp}{\vartriangleright}
\newcommand{\tto}{\blacktriangleright}

\newcommand{\R}{\mathcal{R}}

\newcommand{\W}{\mathcal{W}}
\newcommand{\N}{\mathcal{N}}
\newcommand{\MM}{\mathbf{M}}
\newcommand{\I}{\mathcal{I}}

\DeclareSymbolFont{symbolsC}{U}{txsyc}{m}{n} 
\DeclareMathSymbol{\sif}{\mathrel}{symbolsC}{74}

%

 \newtheorem{theorem}{Theorem}
\newtheorem{lemma}[theorem]{Lemma}

\theoremstyle{definition}

\theoremstyle{definition}
\newtheorem{definition}[theorem]{Definition}


\title{Non-Normal Super-Strict Implications}
\author{
Guido Gherardi \qquad\qquad Eugenio Orlandelli
\institute{Department of Philosophy and Communication Studies, University of Bologna\\
Via Zamboni 38, Bologna, Italy}
\email{\quad guido.gherardi@unibo.it \quad\qquad eugenio.orlandelli@unibo.it}
}

\begin{document}
\maketitle

\begin{abstract}
This paper   introduces the  logics of super-strict implications that are based on C.I. Lewis' non-normal modal logics $\mathsf{S2}$ and $\mathsf{S3}$.  The semantics of these logics is based on Kripke's semantics for non-normal modal logics. This solves a question we left open in a previous paper by showing that these logics are weakly connexive.\end{abstract}

\section{Introduction}
It is widely recognised,  that implication has a fundamental role in logic and deductive reasoning---see, e.g., \cite{Quine51,Rasiowa1974,Routley82}---and, hence, in philosophy `since the essence of philosophy, and especially of dialectics, lies in argumentation and reasoning' \cite[p. 1]{Routley82}. Nonetheless,  there is no general agreement about the correct formal rendering of implication. The  well-known rendering of implication as material implication ($\supset$) has been  defined, among others, by Russell \cite[p. 75--6]{PM}:

\begin{quote}
The essential property that we require of implication is this: `What is implied by a true proposition is true'. It is in virtue of this property that implication yields proofs. But this property by no means determines whether anything, and if so what, is implied by a false proposition. What it does determine is that, if $p$ implies $q$, then it cannot be the case that $p$ is true and $q$ is false, i.e. it must be the case that either $p$ is false or $q$ is true. The most convenient interpretation of implication is to say, conversely, that if either $p$ is false or $q$ is true , then `$p$ implies $q$' is to be true. Hence `$p$ implies $q$' is to be defined to mean: `Either $p$ is false or $q$ is true'. \end{quote}

\noindent Accordingly, in classical logic $A\supset B$ is  taken to be equivalent with $\neg A\lor B$: it is true whenever $A$ is false or $B$ is true. Even if someone  agrees with Russell on the essential property of implication, there is room to be dissatisfied with material implications having a false antecedent. Many found this explication of implication wanting because it falls prey to the so-called \emph{paradoxes of material implication}: 

\begin{equation}\tag{PMI}\label{PMI}
\neg A\supset( A\supset B)\hspace{3cm}B\supset(A\supset B)\end{equation}

\noindent C.I. Lewis \cite{Lewis1918,LL} argued that $\supset$ with its paradoxes  is acceptable for some uses of implication but not  for others, in particular for those where implication expresses a relation of logical entailment. Thus,  he proposed to  supplement classical logic with  a stronger implicative operator called \emph{strict implication} ($\sif$). The formula  $A\sif B$ is true if it is impossible that $A$ is true while $B$ is false, or equivalently if it is necessary that $A\supset B$. Semantically,  strict implication can be expressed in Kripke's relational semantics for modal logics by taking $A\sif B$ to be equivalent to $\Box(A\supset B)$: it is true at a point $w$ if $B$ is true at every point accessible from $w$ where $A$ is true.
 
 Some connexivists \cite{N30} and relevantists \cite{AB75,Routley82} argued that  we should not accept $\sif$ as a formal explication of entailment-expressing uses of implication because of  the so-called \emph{paradoxes of strict implication}: \begin{equation}\tag{PSI}\label{PSI}
\bot\sif B\hspace{3cm} A\sif\top
\end{equation}
These formulas are valid in Lewis' systems of strict implication. Moreover,  also the modalised version of these paradoxes---\mbox{$\Box \neg A\supset(A\sif B)$} and $\Box B\supset(A\sif B)$---are valid. Connexivists and relevantists quarrel with these formulas because, roughly, they take $A$ to entail---and thus imply---$B$ only if the truth of $B$ is  connected to/dependent on the truth of $A$.

One way to resist this line of thought is to claim that the entailment relation holds whenever an argument is such that it is impossible for its premisses to be true while its conclusion is false, cf. \cite[p. 203]{HC96}. From this perspective the paradoxes of strict implication should be accepted as sound principles and $\sif$ can  be taken as a correct explication of entailment. Lewis \cite[p. 250]{LL} proposed also a more  formal argument---his  so-called \emph{independent proof}---for the soundness of PSI. This proof is based on accepting the following  inferential principles:
\begin{description}
\item[$\wedge E$.] From a conjunction we can derive each one of its conjuncts.
\item[$\lor I$.] From a  formula $A$ we can derive each disjunction having $A$ as one of its disjuncts.
\item[\it{DisjSill}.] From a disjunction and the negation of one of its disjuncts we can derive the other disjunct.
\end{description}

\noindent These principles are all the ingredient needed for the following derivation of an arbitrary formula $B$ from a contradiction $A\wedge \neg A$ (i.e., they are sufficient to validate the \emph{ex falso quodlibet}):
$$
\infer[{\it DisjSill}]{B}{\infer[\lor I]{A\lor B}{\infer[\wedge E]{A}{A\wedge \neg A}}&\infer[\wedge E]{\neg A}{A\wedge\neg A}
}
$$
Moreover, if we take strict implication to be an object language representation of the underlying derivability relation---i.e., if we assume the deduction theorem for $\sif$---then we must accept  the following version of the first paradox of strict implication: $(A\wedge\neg A)\sif B$. Neither connexivists nor relevantists are convinced by these arguments. First, they do not accept that $A$ entails $B$ whenever it is impossible for $A$ to be true and $B$ false: this explication of entailment doesn't account for the fact that an entailment needs a connection between its antecedent(s) and its consequent. Second, they reject the \emph{ex falso quodlibet} and at least one of the principles used in Lewis' independent proof: connexivists usually reject $\wedge E$ and/or the law of \emph{simplification}: $(A\wedge B)\sif A$;\footnote{ Routley takes the failure of simplification as a distinctive feature of connexive logics, but this is not so since, e.g.,  Wansing's system $\mathsf{C}$ rejects {\em DisjSill} but accepts simplification and conjunctive elimination. The system $\mathsf{C}$  is interesting from the present perspective because its implication is both connexive and strict, cf. \cite{Wansing}. We are grateful to a reviewer for highlighting this connection.} relevantists usually reject  {\it DisjSill}.

The rejection of \emph{ex falso quodlibet}, be it based on the rejection of $\wedge E$ or of {\it DisjSill}, involves a major departure from classical logic since it means that at least one of $\wedge,\,\lor$, and $\neg$ does not satisfy its Boolean semantics. Thus neither connexivism nor relevantism can be seen as supplementation of classical logic. In \cite{GO} we proposed an alternative solution that sides with connexivists and relevantists in claiming that there are entailment-related uses of implication where PSI are not acceptable while at the same time siding with Lewis in expressing these uses of implication in a supplementation of classical logic. This proposal is based on  introducing  two strengthenings of normal strict implication called, respectively,  \emph{weak} ($\imp$) and \emph{strong} ($\tto$) \emph{super-strict implication} (SSI). The formula $A\imp B$ is defined to be true at a point  $w$ of a relational model if both $A\sif B$ and $\Diamond A$ are true at $w$.\footnote{If, moreover, $\Diamond\neg B$ is true at $w$ then $A\tto B$ is true therein. We will focus on $\imp$ since $\tto$ can be taken as a defined symbol.} Intuitively, $\imp$ has been introduced in \cite{GO} as expressing implications where the truth of the succedent depends, in some minimal sense, on that of the antecedent and it   accords with the following amendment of Russell's essential property of implication:
\begin{quote}
what is implied by a true and possible proposition is true. It is in virtue of this property that implication yields truths.
\end{quote}

Neither \ref{PSI} nor their modalised version are valid if we replace $\sif$ with $\imp$ given that the antecedent of these formulas is either   impossible (for the first paradox) or possibly contingent (for the second one). What becomes valid, instead, is the negation of the first paradox: $\neg(\bot\imp B)$. Moreover, as shown in \cite{GO}, all normal SSI are weakly connexive in that they validate \emph{Aristotle's Theses}:

\begin{equation}\tag{AT}\label{AT}
\neg(A\imp \neg A)\hspace{3cm} \neg (\neg A\imp  A)
\end{equation}

\noindent and  \emph{weak Boethius' Theses}:

\begin{equation}\tag{wBT}\label{BT}
(A\imp B)\supset\neg(A\imp\neg B)\qquad\ (A\imp \neg B)\supset\neg(A\imp B)
\end{equation}

\noindent SSI are a supplementation of classical logic just like  strict implication and modal logics. This means that the \emph{ex falso quodlibet} and all the principles used to derive it in Lewis' independent proof hold in logics of SSI. The difference with $\sif$ is that neither the deduction theorem nor the law of simplification holds for $\imp$. In these respects logics of SSI differ from most other connexive logics.\footnote{Another connexive system that rejects simplification while accepting $\wedge E$ is Nelson's \cite{N30}  one; nevertheless Nelson's connexive system has a conjunction that is weaker than the classical one, cf. \cite{MP19}.} Another striking difference is that, just like for Pizzi's \cite{PW97} \emph{consequential implication}, strong Boethius' thesis does not hold for $\imp$: the formula $ (A\imp B)\imp\neg(A\imp\neg B)$ is not valid in general and we don't know whether it is valid in some class of relational frames.\footnote{Claudio Pizzi suggested to call \emph{Boethian} logics validating \ref{AT} and \ref{BT} but not strong BT.} This is a price we paid to have a paradox-free and weakly connexive  implication that is  a simple modal supplementation of classical logic.

One question that was left open in \cite{GO} is whether non-normal logics of SSI are connexive and well-behaved like the normal ones. This question is interesting from both a logical and a philosophical perspective. Logically, it is interesting because, as far as we know, all modally definable connexive implications  are based on the relational semantics for  normal modalities even if the necessitation rule--- which allows to infer $\Box A$ from $A$---doesn't seem to have any relation with connexivity.\footnote{We won't consider here non-normal logics where at least one direction of $\Box (A\wedge B)\supset\subset(\Box A\wedge\Box B)$ fails.} Philosophically, it is interesting because $\imp$ can be seen as an amended version of Lewis' $\sif$, and it is well-known that---even if he introduced both   the normal systems of $\sif$ $\mathsf{S4}$ and $\mathsf{S5}$ and the non-normal ones $\mathsf{S1}$, $\mathsf{S2}$ and $\mathsf{S3}$---Lewis' favourite systems of strict implication were the non-normal systems $\mathsf{S2}$ and, to a lesser extent, $\mathsf{S1}$. System $\mathsf{S3}$ and its extensions validate the following form of transitivity $(A\sif B)\sif((B\sif C)\sif(A\sif C))$ that  Lewis \cite[p. 496]{LL} did not accept as a valid principle of implication. \emph{Mutatis mutandis}, cf. Sect. \ref{nnSSI}, this can be used as a motivation to choose $\mathsf{S2}$ as the best system of SSI.
This paper solves the question in \cite{GO} to the affirmative by introducing the $\mathsf{S2}$- and $\mathsf{S3}$-based logics of SSI and by showing that they are weakly connexive and well-behaved like the normal ones.\footnote{ Lewis' system $\mathsf{S1}$ will not be considered here since its treatment requires a much more complicated semantics that is based on a mixture of relational and neighbourhood semantics for non-normal logics, cf \cite{Cress72,TW}.}

The rest of the paper is organised as follows. Section \ref{nSSI} recaps the normal logics of SSI as they have been introduced in \cite{GO}. Section \ref{nnSI} sketches Lewis' non-normal logics of strict implication $\mathsf{S2}$ and $\mathsf{S3}$. Section \ref{nnSSI} introduces the $\mathsf{S2}$- and $\mathsf{S3}$-based non-normal logics of SSI and discusses some of their features.

\section{Normal super-strict implication}\label{nSSI}
We sketch here the logics of normal  SSI.  For simplicity, we depart from the presentation given in \cite{GO} by treating $\tto$ as a defined symbol. The language $\mathcal{L}^\imp$ is defined by the following grammar, where $p\in\mathcal{P}$ for  a denumerable set of sentential variables $\mathcal{P}$,

\begin{equation}\tag{$\mathcal{L}^\imp$}\label{language}
A\;::=\; p\;|\;\bot\;|\;A\wedge A\;|\; A\lor A\;|\; A\supset A\;|\;A\imp A
\end{equation}

Parentheses follow the usual conventions and $\imp$ binds lighter than other operators. Capital roman letters will be used as metavariables for formulas. 
The symbol $\top$ is a short for $\bot\supset\bot$ and  $\neg A$ is short for $A\supset\bot$. Finally,
$\tto$ can be expressed in terms of $\imp$ according to the following definition:

\begin{equation}\tag{Def. $\tto$}\label{tto}
A\tto B\;\equiv\; ((A\imp B)\wedge (\neg B\imp\top))
\end{equation}

The semantics of SSI is the usual relational semantics for normal modalities, cf. \cite{Garson}. In particular a \emph{frame} is a pair $\mathcal{F}=\langle \mathcal{W},\mathcal{R}\rangle$, where $\mathcal{W}$ is a non-empty set of \emph{points} (or \emph{worlds}) and $\mathcal{R}\subseteq\mathcal{W}\times\mathcal{W}$ is binary \emph{accessibility relation} on $\mathcal{W}$. A model $\mathcal{M}$ is a frame augmented with an \emph{interpretation function} \mbox{$\mathcal{I}:\mathcal{P}\longrightarrow 2^\mathcal{W}$} mapping each sentential variables to the set of points where that variable is true. 

\emph{Truth of a formula} $A$ at a point $w$ of a model $\mathcal{M}$, to be written as $\models_w^\mathcal{M}A$, or simply as $\models_wA$ when the model is clear from the context, is defined as usual for sentential variables and for the extensional operators (e.g., a conjunction is true at a point $w$ whenever both conjuncts are true at $w$). Truth for formulas having $\imp$ as principal operator is defined as follows:\bigskip

\noindent\begin{tabular}{llll}\
$\models_w^\mathcal{M} A\imp B$&iff& for all $v\in \mathcal{W}$, if $w \mathcal{R}v$  and $\models^\mathcal{M}_v A$, then $\models_v^\mathcal{M} B$&\&\\\noalign{\smallskip}
&&  some $v\in \mathcal{W}$ is such that $w\mathcal{R} v$ and $\models_v^\mathcal{M} A$
\end{tabular}\bigskip

\emph{Truth  in a model} ($\models^\mathcal{M} A$) is defined as truth on every point of that model; \emph{validity in a  frame} $\mathcal{F}$ ($\mathcal{F} \models A$) and \emph{validity in a class $\mathcal{C}$ of frames} ($\mathcal{C}\models A$) is defined as truth in every model based either on that  frame or in that class of frames, respectively; where a model is based on a frame if it is obtained  by adding an interpretation function to that frame. 


An easy calculation shows that the truth-clause for $A\tto B$ would be as follows:\bigskip

\noindent\begin{tabular}{llll}\
$\models_w^\mathcal{M} A\tto B$&iff& for all $v\in W$, if $w Rv$  and $\models^\mathcal{M}_v A$, then $\models_v^\mathcal{M} B$&\&\\\noalign{\smallskip}
&&  some $v\in W$ is such that $wR v$ and $\models_v^\mathcal{M} A$&\&\\\noalign{\smallskip}
&&  some $v\in W$ is such that $wR v$ and $\not\models_v^\mathcal{M} B$\end{tabular}\bigskip

\noindent Moreover, it is possible to express all of $\,\Diamond,\;\Box$, and $\sif$ in terms of $\imp$ according to these  definitions:

\begin{equation}\tag{Def. $\Diamond$}\label{Diamond}
\Diamond A\;\equiv A\imp \top\end{equation}
\begin{equation}\tag{Def. $\Box$}\label{box}
 \Box A\;\equiv\; \neg(\neg A\imp \top)
\end{equation}
\begin{equation}\tag{Def. $\sif$}\label{strictif}
A\sif B\;\equiv\; \neg(( A\wedge\neg B)\imp\top)
\end{equation}

\noindent Over models based on a serial frame  (i.e., such that $\forall w\in\mathcal{W}.\exists v\in\mathcal{W}. w\mathcal{R}v$) we can simplify the definition of $\Box A$ by expressing it as $\top\imp A$.

By a \emph{SSI-logic} we mean the set of all $\mathcal{L}^\imp$-formulas that are valid in some given class of frames. In \cite{GO} we studied the SSI-logics that are definable by some combination of the following properties of $\mathcal{R}$: seriality, reflexivity, transitivity, symmetry, and Euclideaness---i.e., for the modal logics in the cube of normal modalities \cite{Garson}.

A proof-theoretic characterization of SSI-logics has been given in \cite{GO} by means of labelled sequent calculi in the style of \cite{Negri05}. Alternatively, it is possible to use existing axiomatizations of normal modal logics, see \cite{Garson}, to give a proof-theoretic characterization of SSI-logics since $A\imp B$ be can be expressed in modal terms as $\Diamond A\wedge\Box(A\supset B)$. This fact has been exploited in \cite{GO} to show that the complexity of the satisfiability problem in a SSI-logic is  equivalent to that in the corresponding modal logic (where the modal logics $\mathsf{S5}$ and $\mathsf{KD45}$ are  {\sc NP}-complete and most of the other logics are {\sc Pspace}-complete).

We end this section by presenting some facts about SSI-logics \cite{GO}.

\begin{lemma}\label{lemmanormal}\
\begin{enumerate}
\item Neither the  paradoxes of material implication nor those of strict implication are valid for $\imp$.
\item The negation of the first paradox of strict implication is valid for $\imp$.
\item Aristotle's theses and weak Boethius' theses are valid for $\imp$.
\item Strong Boethius' theses are not valid for $\imp$.
\item Neither reflexivity $A\imp A$ nor contraposition $(A\imp B)\supset (\neg B\imp\neg A)$ hold.
\item Modus ponens holds for $\imp$ only over classes of frames that are reflexive: the following rule $$\infer{\mathcal{C}\models B}{\mathcal{C}\models A\imp B\qquad& \mathcal{C}\models A}$$ preserves validity only if $\mathcal{C}$ is a class of reflexive frames.
\end{enumerate}
\end{lemma}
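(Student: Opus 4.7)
The whole lemma pivots on reading the truth clause for $\imp$ as the equivalence
$$\models_w^\M A\imp B \quad\text{iff}\quad \models_w^\M\Diamond A \text{ and } \models_w^\M\Box(A\supset B),$$
which reduces each item to elementary bookkeeping in basic Kripke semantics. I would handle the positive items (2) and (3) first by contradiction at an arbitrary world. For (2), $\bot\imp B$ would require $\Diamond\bot$, which is never true, so $\neg(\bot\imp B)$ is universally valid. For Aristotle's theses in (3), any witness $v$ with $w\R v$ and $\models_v A$ supplied by the $\Diamond A$-conjunct of $A\imp\neg A$ would be forced by $\Box(A\supset\neg A)$ to satisfy $\neg A$ too; the argument for $\neg(\neg A\imp A)$ is symmetric. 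Weak Boethius' theses work identically: any common $\Diamond A$-witness of $A\imp B$ and $A\imp\neg B$ would satisfy both $B$ and $\neg B$.

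For the negative items (1), (4), and the first half of (5) I would exploit ``dead-end'' points to kill the $\Diamond$-conjunct of $\imp$. The singleton model $\W=\{w\}$ with $\R=\emptyset$, $\I(A)=\emptyset$, and $\I(B)=\{w\}$ makes every formula of the shape $X\imp Y$ false at $w$ while keeping $\neg A$ and $B$ true there, and so simultaneously refutes both paradoxes of material implication (their antecedents hold while $A\imp B$ fails), both paradoxes of strict implication in their $\imp$-form, reflexivity $A\imp A$, and strong Boethius $(A\imp B)\imp\neg(A\imp\neg B)$, whose outermost antecedent demands $\Diamond(A\imp B)$. For the failure of contraposition in the second half of (5) I would instead use the serial frame $\W=\{w,v\}$, $\R=\{(w,v),(v,v)\}$, $\I(A)=\I(B)=\{v\}$: then $A\imp B$ is true at $w$ while $\Diamond\neg B$, and hence $\neg B\imp\neg A$, fails there.

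Item (6) splits cleanly. Sufficiency is immediate from $w\R w$ together with the $\Box$-conjunct of $A\imp B$. For necessity I would take the two-point frame $\W=\{w,v\}$, $\R=\{(w,v),(v,v)\}$ with $\I(A)=\{w,v\}$ and $\I(B)=\{v\}$: both $A$ and $A\imp B$ hold at every point (at $w$, $v$ witnesses $\Diamond A$ and is the only successor, where $A\supset B$ is trivially true; at the reflexive $v$, the same check works), yet $B$ fails at the non-reflexive point $w$, so the rule does not preserve validity. The main obstacle throughout is precisely this countermodel design for (4), (5), and (6): the right mix of dead-end and reflexive points is needed so that every required premise holds while the intended conclusion fails, a constraint that tightens noticeably once both the $\Diamond$- and the $\Box$-conjunct of $\imp$ must be controlled simultaneously.
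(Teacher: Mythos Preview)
The paper does not actually prove this lemma: it is stated as a list of facts imported from the authors' earlier paper \cite{GO}, with no argument given in the present text. Your proposal therefore goes well beyond what the paper does by supplying a self-contained semantic verification of each clause, and your arguments are correct. The reduction of $\models_w A\imp B$ to $\models_w\Diamond A\wedge\Box(A\supset B)$ is exactly the right lever, the dead-end singleton cleanly kills every outer $\imp$ for items (1), (4) and reflexivity in (5), and your two-point serial frames for contraposition and for the failure of modus ponens are the natural minimal witnesses.

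One small caveat on item (6): as literally phrased, the ``only if'' direction claims that \emph{every} non-reflexive class $\mathcal{C}$ admits a counterexample to the rule. Your single two-point frame shows this for the class $\mathcal{C}=\{\mathbf{F}\}$ containing just that frame, which is the intended content and the standard way to read the clause; but note that the literal universal reading is actually too strong (over the class of \emph{all} frames no formula $A\imp B$ is valid, since dead ends kill $\Diamond A$, so the rule is vacuously validity-preserving there). This is a looseness in the statement rather than in your argument, and your countermodel is exactly what is wanted to show that reflexivity cannot be dropped.
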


\section{Non-normal strict implication}\label{nnSI}

The language  of strict implication $(\mathcal{L}^\sif$) is determined by the following grammar (where $p\in\mathcal{P}$):
\begin{equation}\tag{$\mathcal{L}^\sif$}\label{languagestrict}
A\;\;::=\;\; p\;|\;\bot\;|\;A\wedge A\;|\; A\lor A\;|\; A\supset A\;|\;A\sif A
\end{equation}
We use the same conventions we introduced for $\mathcal{L}^\imp$; and  we define the normal modalities  as follows:

\begin{equation}\tag{Def.$^\sif$ $\Box$}\label{Boxstrict}
 \Box A\;\equiv\; \top\sif A
\end{equation}
\begin{equation}\tag{Def.$^\sif$ $\Diamond$}\label{Diamondstrict}
\Diamond A\;\equiv \neg(\top\sif \neg A)\end{equation}
In talking about Lewis' systems we will often move from the language \ref{languagestrict} to the standard modal language $\mathcal{L}^\Box$ having $\Box$ and $\Diamond$ as primitive operators and $\sif$ as an abbreviation for $\Box(A\supset B)$.

Lewis' systems $\mathsf{S2}$ and $\mathsf{S3}$ are not normal systems of modal logic in that they do not validate the unrestricted necessitation rule

\begin{equation}\tag{$RN$}\label{RN}
\infer{\Box A}{A}
\end{equation}
\noindent but only a restricted version thereof, called $RN^{rest}$, where $A$ has to be  a theorem of the classical propositional calculus $\mathsf{PC}$. Consequently, these systems contain no theorem of shape $\Box\Box A$ and we must use a semantics that is a generalisation of the relational semantics for normal modalities.
A simple relational  semantics for Lewis' non-normal systems $\mathsf{S2}$ and $\mathsf{S3}$ has been introduced in \cite{kripke}. It is based on the fact that these systems can be consistently extended with the formula $\Diamond\Diamond A$:  we can have $\mathsf{S2}$- and $\mathsf{S3}$-models with points where every formula, $\bot$ included, is possible. Kripke used this fact to extend the well-known relational semantics for normal modalities with  so-called \emph{non-normal points} where every formula is possible and no formula is necessary, see \cite{CZ97,HC96,kripke,Tesi21}.

Formally, this means that a non-normal frame is a triple $\mathbf{F}=\langle \mathcal{W},\mathcal{R},\mathcal{N}\rangle
$
 where $\mathcal{W}$ and $\mathcal{R}$ are as for normal modalities, see \S \ref{nSSI}, and $\mathcal{N}$ is a subset of $\mathcal{W}$ (containing the \emph{normal points} of the model).\footnote{As in \cite{CZ97,Tesi21},  we lift the restrictions given in \cite{HC96,kripke} that (i) a frame must contain some normal point and that (ii) $\forall w\in\mathcal{W}/\mathcal{N}.\exists u\in\mathcal{N}.u\mathcal{R}w$.} A non-normal model $\mathbf{M}$ is a frame together with an interpretation function $\mathcal{I}$ mapping propositional variables to subsets of $\mathcal{W}$. Truth of a formula at a point of a model is defined as usual for extensional formulas and it is defined as follows for $\sif$:\bigskip
 
 \noindent\begin{tabular}{llll}\
$\models_w^\mathbf{M} A\sif B$&iff& $w\in\mathcal{N}$ and for all $v\in \mathcal{W}$, if $w \mathcal{R}v$  and $\models^\mathbf{M}_v A$, then $\models_v^\mathbf{M} B$\end{tabular}\bigskip

Given  Definitions \ref{Boxstrict} and \ref{Diamondstrict}, we have the following truth clauses for $\Box$ and $\Diamond$:\bigskip

\noindent\begin{tabular}{llll}
$\models_w^\mathbf{M} \Box A$&iff& $w\in\mathcal{N}$ and  for all $v\in \mathcal{W}$, if $w \mathcal{R}v$   then $\models_v^\mathbf{M} A$\\\noalign{\medskip}
$\models_w^\mathbf{M} \Diamond A$&iff&  $w\not\in\mathcal{N}$ or  some  $v\in \mathcal{W}$ is such that  $w \mathcal{R}v$  and $\models^\mathbf{M}_v A$\end{tabular}\bigskip

The definition of truth in a model, and consequently that of validity in a (class of) frame(s), differs from the usual one, see \S \ref{nSSI}, in that we restrict quantification to normal points: a formula is true in $\mathbf{M}$ if it is true in all normal points of the model (so that whether $A$ is true  in a non-normal point  is irrelevant for evaluating its truth in a model and its validity).

The logic $\mathsf{S2}$ is defined as the set of all formulas that are valid in reflexive non-normal frames and $\mathsf{S3}$ as the set of formulas valid in reflexive and transitive non-normal frames. The logic defined by the class of all non-normal frames is called $\mathsf{S2^0}$. We refer the reader to \cite{HC96} for other logics defined in terms of classes of non-normal frames.

As we have already anticipated the rule \ref{RN} does not preserve validity in these classes of frames. To illustrate, the formula $\Box\top$ is valid in $\mathsf{S2}^0$-frames but $\Box\Box\top$ is not valid:  a frame $\mathbf{F}$ containing both a normal point $w$ and a non-normal point  that is accessible from $w$ is such that $\mathbf{F}\models \Box \top$ and $\mathbf{F}\not\models\Box\Box\top$.

A proof-theoretic characterization of $\mathsf{S2}$ and $\mathsf{S3}$ in terms of labelled sequent calculi defined over the language $\mathcal{L}^\Box$ has been given in \cite{Tesi21}. Lewis' \cite{LL} axiomatization of $\mathsf{S2}$ is as follows (taking $\lor$, $\supset$, and $\sif$ as defined symbols and using $A[B/p]$  ($A[B//p]$) for the formula obtained by replacing, every (some, resp.) occurrence of $p$ in $A$ with an occurrence of $B$).\medskip

\begin{definition}[Lewis' axiomatisation of $\mathsf{S2}$]\label{Lewisax}\

\begin{minipage}[t]{0,5\textwidth}\begin{itemize}
\item Axioms:
\begin{enumerate}
\item $(p\wedge q)\sif (q\wedge p)$
\item $(p\wedge q)\sif p$
\item $p\sif (p\wedge p)$
\item $((p\wedge q)\wedge r)\sif(p\wedge(q\wedge r))$
\item$((p\sif q)\wedge(q\sif r))\sif (p\sif r)$
\item $(p\wedge(p\sif q))\sif q$
\item $\Diamond (p\wedge q)\sif \Diamond p$
\end{enumerate}
\end{itemize}\end{minipage}
\begin{minipage}[t]{0,5\textwidth}\begin{itemize}
\item Rules:
\begin{enumerate}
\item Uniform substitution:\quad$\infer{A[B/p]}{A}$
\item Substitution of strict equivalents:$$\infer{A[B//C]}{A\qquad (B\sif C)\wedge (C\sif B)}$$
\item Adjunction: \quad $\infer{A\wedge B}{A\qquad B}$
\item Strict detachment:\quad $\infer{B}{A\sif B\qquad A}$
\end{enumerate}

\end{itemize}
\end{minipage}\end{definition}\bigskip

\noindent The calculus  $\mathsf{S3}$ is obtained by replacing  axiom 7 with: $(p\sif q)\sif(\neg\Diamond q\sif\neg\Diamond p)$.\footnote{Axiom 7 is a theorem of $\mathsf{S3}$; if we remove axiom 7 we obtain Lewis' system $\mathsf{S1}$.} One peculiarity of Lewis' axiomatic systems is that they are not given as extensions of the classical propositional calculus $\mathsf{PC}$ (even if they contain it). A more customary axiomatization of $\mathsf{S2}$, based on $\mathsf{PC}$ and taking $\Box$ as primitive modal operator, has been given by Lemmon \cite{lemmon}.

\begin{definition}[Lemmon's axiomatisation of $\mathsf{S2}$]\label{Lemmonax}\

\begin{minipage}[t]{0,5\textwidth}\begin{itemize}
\item Axioms:
\begin{enumerate}
\item[$PC.$] All theorems of $\mathsf{PC}$
\item[$K.$] $\Box (A\supset B)\supset(\Box A\supset\Box B)$
\item[$T.$] $\Box A\supset A$
\end{enumerate}
\end{itemize}\end{minipage}
\begin{minipage}[t]{0,5\textwidth}\begin{itemize}
\item Rules:
\begin{enumerate}
\item[$MP.$] Modus Ponends:\quad$\infer{B}{A\supset B\qquad A}$
\item[$BR.$] Becker's rule:\quad$\infer{\Box(\Box A\supset \Box B)}{\Box(A\supset B)}$
\item[$N^{rest}.$] Restricted necessitation: \quad $\infer{\Box A}{A}$\\ provided $A$ is a theorem of $\mathsf{PC}$ 
\end{enumerate}

\end{itemize}
\end{minipage}\end{definition}\bigskip

\noindent
Lemmon's axiomatization of $\mathsf{S3}$ is obtained by replacing axiom $K$ with $\Box(A\supset B)\supset\Box(\Box A\supset\Box B)$ and by dropping Becker's rule.\footnote{Axiom $K$ is derivable and Becker's rule is admissible in $\mathsf{S3}$.} An axiomatization of the logic $\mathsf{S2^0}$ is obtained by dropping the reflexivity axiom $T$ from Lemmon's axiomatization of $\mathsf{S2}$.

\section{Non-normal super-strict implication}\label{nnSSI}
We are now in a position to introduce non-normal logics of SSI. We will use the language \ref{language} and, as will be clear from the semantics, we can still define all of $\tto$, $\Box$, $\Diamond$, and $\sif$ in terms of $\imp$ as we did in \S \ref{nSSI}.\footnote{ And, \emph{vice versa}, $\imp$ is definable in terms of any one of $\Box$, $\Diamond$ and $\sif$, but not in terms of $\tto$.} The \emph{weight} of a formula,  $\texttt{w}(A)$, is defined as the number of binary operators occurring therein; the \emph{modal depth} of a formula, $\texttt{md}(A)$, is defined as the maximal nesting of modalitites---$\imp,\,\tto,\,\Box,\,\Diamond,$ and $\sif$---in $A$.

The semantics for non-normal SSI is based on Kripke's one for the logics $\mathsf{S2}$ and $\mathsf{S3}$. In particular,
a \emph{frame} is a triple $\mathbf{F}=\langle\W,\R,\N\rangle$, where $\W\neq\emptyset$,\; $\R\subseteq\W\times\W$, and $\N\subseteq \W$;  a \emph{model} $\MM$ is a frame augmented with an interpretation function $\I:\mathcal{P}\longrightarrow\W$. The truth clause for $A\imp B$ is as follows:\bigskip

\noindent\begin{tabular}{llll}\
$\models_w^\mathbf{M} A\imp B$&iff& for all $v\in W$, if $w Rv$  and $\models^\mathbf{M}_v A$, then $\models_v^\mathbf{M} B$&\&\\\noalign{\smallskip}
&&  some $v\in W$ is such that $wR v$ and $\models_v^\mathbf{M} A$&\&\\\noalign{\smallskip}
&&  $w\in\N$\end{tabular}\bigskip

\noindent \emph{Truth in a model} is defined as truth in every normal point of that model and \emph{validity} in a (class of) frame(s) is defined as truth in every model based on that (class of) frame(s). 

As for Lewis' systems we will consider the non-normal $\mathcal{L}^\imp$-logics $\mathsf{S2^0},\,\mathsf{S2},$ and $\mathsf{S3}$, which are defined as the set of formulas valid in the class of all frames, all reflexive frames and  all preordered frames, respectively.

As we have already anticipated, in non-normal logics $\imp$ is interdefinable with any one of $\sif$, $\Box$ and $\Diamond$ just like in normal logics, cf. Section \ref{nSSI}. As a consequence we have that Lewis' and Lemmon's axiomatisations of $\mathsf{S2^0}$, $\mathsf{S2}$ and $\mathsf{S3}$ are axiomatisations of the corresponding non-normal logics of SSI (over the $\mathcal{L}^\sif$- and the $\mathcal{L}^\Box$-language, respectively) and that these logics of SSI are decidable, cf. \cite{decid} for the decidability of $\mathsf{S2}$ and $\mathsf{S3}$. 

We move on to present some interesting features of logics on non-normal SSI which, all in all, will show that these logics have a behaviour that resembles that of normal ones. We begin by listing some important formulas that do not belong to these logics.

\begin{lemma}\label{unvalid} None of the following formulas is valid in classes of non-normal frames:
\begin{enumerate}
\item The paradoxes of material implication: $\neg A\imp(A\imp B)$ and $B\imp(A\imp B)$;
\item The paradoxes of strict implication: $\bot\imp B$ and $A\imp \top$;
\item Reflexivity: $A\imp A$;
\item Contraposition: $(A\imp B)\supset (\neg B\imp\neg A)$.
\end{enumerate}
\end{lemma}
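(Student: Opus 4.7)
The plan is to refute each formula by exhibiting suitable counter-models. Because the three classes of frames at play satisfy $\{\text{reflexive-and-transitive frames}\}\subseteq\{\text{reflexive frames}\}\subseteq\{\text{all non-normal frames}\}$, a counter-model whose underlying frame is reflexive and transitive will automatically refute the formula in each of the logics $\mathsf{S2^0}$, $\mathsf{S2}$, and $\mathsf{S3}$. I will therefore work inside the singleton reflexive, hence vacuously transitive, frame $\FF=\langle\{w\},\{(w,w)\},\{w\}\rangle$, with $w$ declared normal.

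For items (1)--(3) I will use the model $\MM_1$ on $\FF$ in which $p$ is false and $q$ is true at $w$. The existence-clause of the truth condition for $\imp$---``some $\R$-successor of $w$ satisfies the antecedent''---is violated for each of $\bot\imp q$, $p\imp\top$, $p\imp p$ and $p\imp q$, so all four formulas are false at $w$. This immediately yields items (2) and (3). For item (1), the point is that $w$ itself satisfies both $\neg p$ and $q$, so the universal clauses of $\neg p\imp(p\imp q)$ and $q\imp(p\imp q)$ would require $w$ to satisfy $p\imp q$, which it does not; both paradoxes of material implication are therefore false at $w$ in $\MM_1$.

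For item (4) I will switch to the model $\MM_2$ on the same frame $\FF$ in which both $p$ and $q$ are true at $w$. Here $p\imp q$ holds at $w$, since its unique $\R$-successor $w$ satisfies both $p$ and $q$ and $w\in\N$; but $\neg q\imp\neg p$ fails because no $\R$-successor of $w$ satisfies $\neg q$. Hence $(p\imp q)\supset(\neg q\imp\neg p)$ is false at $w$ in $\MM_2$.

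There is no genuine obstacle to overcome: every step is a direct application of the truth clause for $\imp$, and the only points requiring care are (i) picking a reflexive and transitive frame, so that a single counter-model simultaneously refutes the formula in all three non-normal logics, and (ii) noticing that the existence-clause of the $\imp$-clause is what does the work for items (1)--(3), whereas contraposition must instead be refuted by exploiting that clause's failure for $\neg q\imp\neg p$ in a model where $q$ is universally true.
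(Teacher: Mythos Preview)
Your proof is correct. The paper's own argument is a one-liner: it simply observes that, by Lemma~\ref{lemmanormal}, each of these formulas already fails over \emph{normal} frames, and since every normal frame is a non-normal frame with $\mathcal{N}=\mathcal{W}$, invalidity carries over immediately to all classes of non-normal frames. Your argument is the same in substance---your singleton frame has $\mathcal{N}=\{w\}=\mathcal{W}$ and is therefore a normal (reflexive, transitive) frame---but instead of citing the earlier lemma you spell out the explicit counter-models and identify which clause of the $\imp$-truth-condition fails in each case. The paper's route is shorter because it reuses prior work; yours is self-contained and more informative about \emph{why} each formula fails.
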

\begin{proof} These formulas are not valid in normal frames, cf. Lemma \ref{lemmanormal}, and, hence, cannot be valid in frames generalising them such as non-normal frames.
\end{proof}

Next we show that these logics are weakly connexive in that they validate Aristotle's and weak Boethius' theses, though in general they do not validate strong Boethius' thesis. 
\begin{theorem}\label{connexive}
Aristotle's  and weak Boethius' theses belong to the $\mathcal{L}^\imp$-logic $\mathsf{S2^0}$.
\end{theorem}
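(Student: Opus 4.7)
The plan is to observe that, although the semantics now allows non-normal points, validity in $\mathsf{S2^0}$ still means truth at every \emph{normal} point of every non-normal model. So to establish each of the four theses, it suffices to fix an arbitrary non-normal model $\MM$ and an arbitrary $w\in\N$ and prove that the relevant formula holds at $w$. In each case, the argument proceeds by contradiction and exploits the \emph{existential} conjunct in the truth clause for $\imp$ (``some $v$ with $w\R v$ and $\models_v A$''), which is what blocks vacuous satisfaction.

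For Aristotle's theses, I would assume toward contradiction that $\models_w^\MM A\imp\neg A$ for some normal $w$. The truth clause yields both: (i) some $v$ with $w\R v$ and $\models_v^\MM A$, and (ii) for every such $v$, $\models_v^\MM\neg A$. Instantiating (ii) with the witness from (i) gives $\models_v^\MM A$ and $\not\models_v^\MM A$ simultaneously, a contradiction. Hence $\not\models_w^\MM A\imp\neg A$ and so $\models_w^\MM\neg(A\imp\neg A)$. The proof of $\neg(\neg A\imp A)$ is symmetric, swapping $A$ and $\neg A$.

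For the weak Boethius theses, I would again argue contrapositively. Suppose $\models_w^\MM A\imp B$ and $\models_w^\MM A\imp\neg B$ at a normal $w$. The first antecedent guarantees a witness $v$ with $w\R v$ and $\models_v^\MM A$, and forces $\models_v^\MM B$. The second then forces $\models_v^\MM\neg B$ at the same $v$, which is absurd. Thus $\models_w^\MM(A\imp B)\supset\neg(A\imp\neg B)$. The dual form $(A\imp\neg B)\supset\neg(A\imp B)$ follows by the same reasoning.

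I do not foresee a genuine obstacle. The only point deserving a sentence of comment is that the move to non-normal semantics does not affect this argument: the novelty of non-normal frames is that $\imp$-formulas are automatically false at points outside $\N$, but the definition of truth in a model restricts attention to points in $\N$, and at such points the truth clause for $\imp$ agrees with the normal one. Consequently the validity of \ref{AT} and \ref{BT} is inherited, essentially verbatim, from the corresponding arguments used in \cite{GO} for normal SSI; no frame condition (not even seriality) is required.
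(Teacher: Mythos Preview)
Your proposal is correct and, in its final paragraph, essentially reproduces the paper's own argument: the paper simply observes that \ref{AT} and \ref{BT} have modal depth $1$, and that for depth-$1$ formulas validity in non-normal frames coincides with validity in normal frames (since truth in a model is evaluated only at normal points), whence the result follows from Lemma~\ref{lemmanormal}. Your explicit semantic verifications for each thesis are sound but redundant given that observation; the paper dispenses with them entirely.
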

\begin{proof} The lemma follows from the fact that \ref{AT}  and \ref{BT} are valid in all normal frames, cf. Lemma \ref{lemmanormal}. These theses have modal depth 1  and  for formulas of modal depth 1 validity in non-normal frames coincide with validity in normal ones (since we are restricting attention to normal points). 
\end{proof}

 Next, we consider the status of Lewis' axioms, see Definition \ref{Lewisax}, when we replace $\sif$ with $\imp$. This is important to better understand how $\imp$ differs from $\sif$.
\begin{lemma}\label{Lewisaxsuper} None of the axioms of Lewis' axiomatization of $\mathsf{S2}$ (nor of $\mathsf{S3}$) is valid if we replace (all) instances of $\sif$ with instances of $\imp$.\footnote{ Nonetheless, $\imp$ is transitive in that the formula $((A\imp B)\wedge (B\imp C))\supset(A\imp C)$ is valid in every  non-normal frame.}
\end{lemma}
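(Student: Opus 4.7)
The plan is to refute every axiom by a single uniform countermodel that exploits the characteristic difference between $\sif$ and $\imp$: the clause for $A\imp B$ at $w$ forces $w$ to have at least one $\R$-successor verifying $A$ (and requires $w\in\N$), whereas $A\sif B$ carries no such existential demand. So, by picking a model in which the antecedent of every axiom is impossible at a normal point, we simultaneously falsify all the $\imp$-translated axioms.

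Concretely, I would take $\mathbf{F}=\langle\{w\},\{(w,w)\},\{w\}\rangle$ with $\I(p)=\I(q)=\I(r)=\emptyset$. The frame is reflexive and preordered, hence lives in $\mathsf{S2^0}$, $\mathsf{S2}$, and $\mathsf{S3}$, so a countermodel here suffices for all three systems. The key observation I would record first is: if $A$ is false at $w$, then since $w$ is its only $\R$-successor, no $\R$-successor of $w$ satisfies $A$, and therefore $A\imp B$ fails at $w$ for every $B$.

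This observation immediately takes care of axioms 1--4 and 6, whose antecedents ($p\wedge q$, $p\wedge q$, $p$, $(p\wedge q)\wedge r$, and $p\wedge(p\imp q)$) are all false at $w$ because every variable has empty extension. For axiom 5, $((p\imp q)\wedge(q\imp r))\imp(p\imp r)$, I would note that $p$ is not possible at $w$, so $p\imp q$ already fails its existential clause; thus the conjunctive antecedent is false at $w$ and the outer $\imp$ fails as well. The same remark refutes the $\mathsf{S3}$-axiom $(p\imp q)\imp(\neg\Diamond q\imp\neg\Diamond p)$.

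The step that needs a bit of care, and is the main obstacle, is axiom 7, $\Diamond(p\wedge q)\imp\Diamond p$, since $\Diamond$ behaves specially at non-normal points. Reading $\Diamond$ either as a primitive Kripke modality or via the definition $\Diamond A\equiv A\imp\top$ (the two agree at normal points), at the normal point $w$ the formula $\Diamond(p\wedge q)$ is false: the unique $\R$-successor of $w$ is $w$ itself, which is normal and falsifies $p\wedge q$. Hence the antecedent is once again impossible at $w$ and the axiom fails at $w$. This is the only verification that really touches the non-normal nature of the semantics; the other axioms fall out uniformly from the existential import built into $\imp$.
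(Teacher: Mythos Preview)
Your argument is correct and follows exactly the strategy of the paper's proof: falsify each axiom by making its antecedent impossible at a normal point, exploiting the existential clause in the semantics of $\imp$. The paper leaves this at the level of a general remark (``the antecedent of each one of Lewis' axioms is such that we can easily define a point of a model---be it reflexive or not---where one of its instances is not possible''), whereas you supply a single concrete reflexive and transitive one-world model that refutes all seven $\mathsf{S2}$-axioms and the $\mathsf{S3}$-axiom simultaneously; this is a welcome sharpening but not a different idea.
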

\begin{proof}
It is immediate to notice that the antecedent of each one of Lewis' axioms is such that we can easily define a point of a model---be it reflexive or not---where one of its instances is not possible, and this is enough to conclude that these formulas are not valid if we replace $\sif$ with $\imp$.
\end{proof}

\noindent Things goes differently with Lewis' rules since all of them preserve validity over reflexive frames, where reflexivity is needed only for super-strict detachment (just as for strict detachment). 

The results in Lemma \ref{Lewisaxsuper} might appear rather disappointing and at first glance they can be taken as a reason to dismiss SSI as an interesting explication of entailment-related implications.  Even if it might be acceptable that simplification (Axiom 2) does not hold, it would be difficult to argue that entailment is not closed under commutativity and associativity of $\wedge$ (Axioms 1 and 4) and  duplication (Axiom 3); this would be  even harder for transitivity of implication (Axiom 5), the formula version of  detachment (Axiom 6) and the consistency axiom (Axiom 7).  It seems that we amended $\sif$ in such a way that the price paid to avoid the paradoxes of strict implication is too high. Nevetheless, we believe this conclusion is too fast. In general, we should separate formulas into impossible, necessary and  contingent ones. Anyone believing that SIP are not valid principles of entailment should agree that Lewis' axioms should not be accepted when they have an impossible antecedent---e.g.,  a contradictory conjunction does not entail anything irrespectively of the order of its conjuncts and, hence, entailment is not closed under the commutativity of impossible conjunctions.  When, instead, we consider instances of Lewis' axioms with a necessary antecedent then we should accept them as sound principles of entailment. These instances pose  no threat since they are valid when we replace $\sif$ with $\imp$ whenever we consider reflexive frames---every necessary formula is also possible over reflexive frames---and there are independent reason to restrict our attention to reflexive frames---e.g., super-strict detachment preserves validity only over reflexive frames and we took the truth of $A$ and of $A$ implies $B$ to  yield the truth of $B$. The problem lies in instances having a contingent antecedent. These instances are sound principles of entailment but they are not   valid for $\imp$: they are true at worlds where the antecedent is possible and false at worlds where it is not possible. Even if the antecedent in general is contingent, in worlds where it is not possible it expresses a relative impossibility and, hence, the entailment relation should not hold just like instances with an (absolutely) impossible antecedent. Things are different in worlds where the contingent antecedent is possible and its truth  necessitates that of the succedent. This is a paradigmatic situation where the antecedent should imply the consequent according to the entailment-related uses of implication we aim at capturing. In this situation Lewis' axioms are true but not valid. Nevertheless we can transform each one of Lewis' axioms into a valid formula governing $\imp$ if we restrict our attention to instances having a possible antecedent---modulo reflexivity for Axiom 6.

\begin{lemma}\label{Lewisaxsupergood} If $A\imp B$ is one of Lewis' axioms given in Definition \ref{Lewisax} (with $\sif$ replaced by $\imp$) then the formula $\Diamond A\supset(A\imp B)$ is valid over reflexive non-normal frames.
\end{lemma}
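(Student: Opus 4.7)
The plan is to reduce the statement to the three-part truth clause for $\imp$. Fix a model $\MM$ on a reflexive non-normal frame and a normal point $w\in\N$ at which $\Diamond A$ holds, for some instance of a Lewis axiom $A\imp B$ (with $\sif$ replaced by $\imp$). To see that $A\imp B$ holds at $w$ I must check three conditions: $(i)$ $w\in\N$; $(ii)$ there is some $v$ with $w\R v$ and $\models_v A$; and $(iii)$ every $\R$-successor of $w$ satisfying $A$ also satisfies $B$. Condition $(i)$ is given, and since $w$ is normal the truth clause for $\Diamond A$ at $w$ forces a witnessing $\R$-successor, yielding $(ii)$. All the real content lies in condition $(iii)$, which I would verify case-by-case for the seven axioms of Definition \ref{Lewisax}.

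For Axioms 1--4 both $A$ and $B$ are purely propositional, so $(iii)$ reduces to a classical tautology evaluated at each successor. Axiom 7 calls only for a brief case split on the normality of the successor $v$: if $v\notin\N$ then $\Diamond p$ holds at $v$ automatically, while if $v\in\N$ then the $\R$-successor of $v$ witnessing $p\wedge q$ simultaneously witnesses $p$. Axiom 5 is the most delicate: at a $v$ with $w\R v$ and $\models_v (p\imp q)\wedge(q\imp r)$, I would unpack both conjuncts and then reassemble the three clauses for $(p\imp r)$ at $v$---normality and the existential $p$-successor are inherited from $(p\imp q)$, and the universal clause follows by chaining, since any $\R$-successor $u$ of $v$ with $\models_u p$ satisfies $\models_u q$ by $(p\imp q)$ and hence $\models_u r$ by $(q\imp r)$. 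Finally, for Axiom 6, if $\models_v p\wedge(p\imp q)$ then reflexivity yields $v\R v$, and the universal clause of $(p\imp q)$ applied to $v$ itself gives $\models_v q$.

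The only step that calls for real care is Axiom 5, where the nested $\imp$ inside the consequent forces a verification of \emph{all three} clauses at the successor $v$; the first two are not tautological but are inherited from the antecedent rather than from the outer $\Diamond A$. Reflexivity enters only in the treatment of Axiom 6, in precise analogy with the role of reflexivity for super-strict detachment remarked immediately after Lemma \ref{Lewisaxsuper}; all the other cases actually go through over arbitrary non-normal frames.
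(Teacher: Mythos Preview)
Your argument is correct and follows the same idea as the paper's proof, which is the one-line observation that at a normal point where $\Diamond A$ holds the truth clause for $A\imp B$ collapses to that for $A\sif B$; you have simply made explicit the axiom-by-axiom verification of the universal clause that the paper leaves to the reader, and in particular your careful unpacking of Axiom~5 (where the nested $\imp$ in the consequent requires checking all three clauses at the successor) is exactly the work hidden behind the paper's terse sentence. One small wrinkle: in your treatment of Axiom~7 you say that $\Diamond p$ holds automatically at a non-normal successor $v$, but under the $\mathcal{L}^\imp$-definition $\Diamond A\equiv A\imp\top$ the formula $\Diamond p$ is \emph{false} at any $v\notin\mathcal{N}$; the case is still vacuous, however, because the antecedent $\Diamond(p\wedge q)$ is equally false there, so your conclusion stands.
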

\begin{proof}
The definition of $\imp$ coincides with that of $\sif$ whenever we have an implication whose antecedent is possible.\end{proof}

\noindent This lemma shows that if we replace $\sif$ with $\imp$ because we don't accept paradoxical instances of PSI then we obtain as valid exactly all and only the non-paradoxical instances of Lewis' implications.\footnote{Observe that $\Diamond A\supset(A\imp B)$ can be used as an alternative definition of strict implication in the language $\mathcal{L}^\imp$. Nevertheless, we take the main role of this formula role  to be that of separating paradoxical and non paradoxical instances of  valid strict  implications; where, according to the present perspective, only the latter can be taken to express an entailment related implication.} 
 It also shows that the correct logic of SSI should be at least as strong as $\mathsf{S2}$ since we must restrict our attention to reflexive frames if we want super-strict detachment---as well as its formula-version--- to hold. Moreover, we can use an analogous argument to show that the correct logic of SSI shouldn't be stronger than $\mathsf{S2}$. Indeed, in $\mathsf{S3}$ the formula $\Diamond(A\imp B)\supset((A\imp B)\imp(\Diamond (B\imp C)\supset((B\imp C)\imp(\Diamond A\supset(A\imp C)))))$ is valid, but this formula is not better motivated than $(A\sif B)\sif((B\sif C)\sif(A\sif C))$ in Lewis' approach.

We conclude this section by noticing that also any dissatisfaction with the non-reflexivity of $\imp$ can be mitigated along the same line since $\Diamond A\supset(A\imp A)$ is valid. What is not valid is $\neg \Diamond A\supset(A\imp A)$, but instances of reflexivity of $\imp$  with a (locally) impossible antecedent should not be accepted as valid according to the present paradox-free explication of implication and entailment.

\section{Conclusion}

This paper presented the non-normal logics of SSI $\mathsf{S2}$ and $\mathsf{S3}$ as a way to amend Lewis' $\sif$ and  express  paradox-free notions of implication and entailment. The semantics for these logics is based on Kripke's semantics for non-normal logics and, this paper has shown, non-normal logics of SSI are as well-behaved as normal ones. In particular they validate the same connexive principles and can thus be seen as a simple and natural family of non-normal connexive supplementations of classical logics. 

This paper has not addressed the problem of giving a proof-theoretic characterisation of these logics, if not by translating them into the language $\mathcal{L}^\sif$ or $\mathcal{L}^\Box$.  If we want to stay within the language $\mathcal{L}^\imp$ then labelled calculi for these logics can be given by a mixture of the calculi in \cite{GO} with those in \cite{Tesi21}. A characterisation in terms of axiomatic systems can be given by exploiting the techniques used in \cite{R20}. We believe this latter problem to be of particular interest given the results of Lemma \ref{Lewisaxsuper}, but we leave it for future research.

\subsection*{Acknowledgements}
We are grateful to the anonymous reviewrs and to audience at \emph{Trends in Logic XXI} for valuable feedback.
\bibliographystyle{eptcs}
\bibliography{biblio}
\end{document}